\documentclass[conference]{IEEEtran}

\usepackage{cite}
\usepackage{amsmath,amssymb,amsfonts}
\usepackage{amsthm}
\usepackage{bm}
\usepackage{bbm}
\usepackage{graphicx}
\usepackage{xcolor}
\usepackage{url}
\usepackage{enumitem}
\usepackage{subcaption}
\usepackage{balance}
\allowdisplaybreaks

\def\bv{\mathbf{v}}

\def\bs{\mathbf{s}}
\def\bx{\mathbf{x}}
\def\bu{\mathbf{u}}

\def\btheta{\boldsymbol{\theta}}
\def\bgamma{\boldsymbol{\gamma}}
\def\bzeta{\boldsymbol{\zeta}}

\def\bxi{\boldsymbol{\xi}}

\def\bI{\mathbf{I}}

\def\bC{\mathbf{C}}
\def\bG{\mathbf{G}}
\def\bSigma{\boldsymbol{\Sigma}}
\def\calM{\mathcal{M}}
\def\calQ{\mathcal{Q}}
\def\calD{\mathcal{D}}
\def\calS{\mathcal{S}}

\newtheorem{theorem}{Theorem}
\newtheorem{corollary}{Corollary}
\newtheorem{proposition}{Proposition}
\newtheorem{remark}{Remark}

\begin{document}

\title{Distributed Edge Learning under \\ Imperfect Data Sensing}

\author{
  \IEEEauthorblockN{Mehdi Karbalayghareh, David J.~Love, and Christopher G.~Brinton}
  \IEEEauthorblockA{
  Purdue University, West Lafayette, IN 47907, USA\\
  E-mails: \{mkarbala, djlove, cgb\}@purdue.edu}
}

\maketitle

\begin{abstract}
Distributed learning systems typically assume that local data is already available at clients with fixed quality, while in practice, data is sensed through \emph{imperfect} physical processes whose quality depends on modality, resolution, sensing power, and sample size. We model sensing noise as a structured modality-dependent covariance and derive a non-convex learning convergence bound whose irreducible sensing floor is governed by the alignment between the modality noise covariance and the loss-sensitivity geometry, so the optimal modality minimizes this noise-gradient alignment rather than total noise power alone. The analysis further yields a sensor-hardware achievability bound for $\epsilon$-stationarity and a hardware-saturation threshold on the accumulated dataset size. We jointly optimize modality, resolution, power, and sample count, and demonstrate the performance gain via simulations.
\end{abstract}


\section{Introduction}
\label{sec:intro}
AI-native wireless networks increasingly rely on distributed edge learning, where large numbers of heterogeneous devices must collaboratively train models under resource constraints~\cite{Andrews-6G,Giordani-6G-ComMag2020,Brinton-6G-ComMag2025,Saad-6G-2020}. Federated learning (FL) is an efficient framework for this regime: devices perform local gradient updates on private data and periodically share model parameters with a central server, avoiding the need to transmit raw data~\cite{McMahan2017FedAvg,Li2020FLSurvey,Kairouz2021AdvancesFL}.

Substantial work has analyzed FL convergence under practical impairments, including non-i.i.d.\ data, partial participation, imperfect channel conditions, and over-the-air aggregation distortion~\cite{Yang2020AirCompFL,Sery2021OTAHetero,Amiri2020ADSGD,Karbalayghareh-JSAIT26-dynamicFL}, but most analyses treat training data as a fixed \emph{noise-free} input. This abstraction breaks down when data is continuously sensed: unlike communication noise, which acts on model parameters post hoc and can be suppressed by power control or coding, sensing noise contaminates the training samples and propagates irreversibly into every gradient, accumulating into a persistent floor on the gradient norm governed by each device's sensing configuration.

In this work, we model the sensing process explicitly and characterize how sensor noise propagates through FL convergence. We introduce a structured per-modality noise covariance and derive a non-convex FL convergence bound whose irreducible floor is governed by the noise-gradient alignment between the sensor's directional noise and the loss-sensitivity geometry. We show that optimal modality selection is determined by this alignment rather than by total noise power alone, a distinction absent from prior sensing-aware FL formulations. The analysis further yields a hardware achievability condition on sensor parameters for $\epsilon$-stationarity, a hardware-saturation threshold on accumulated dataset size beyond which the device can no longer operate at maximum sensing power, and sensing configurations through a joint optimization for sensing power, modality, resolution, and sample count.

A brief outline of relevant literature is as follows. Existing work on sensing in FL has progressed along axes that do not model the noise of data acquisition via imperfect sensing. \emph{Modality availability heterogeneity} addresses clients possessing different modality subsets, with the design problem of model uploading or aggregation under heterogeneous configurations~\cite{mmFedMC,RELIEF}, but assumes clean modality inputs. \emph{Integrated sensing and communication (ISAC)} studies wireless channel distortion at ISAC front-ends~\cite{Cao2026VFEEL,Liu-STSP23_taskOriented,Wen-TWC2026-ISCC}, sample count optimization at the sensing side~\cite{Liang2025ISCC,Fu2025JSCC}, or radar-style detection of external targets via dual-purpose uplink signals~\cite{Asaad-Tabassum-2026,EM-PFL-ISAC-2025}. None of these directions models how the physical sensing parameters jointly shape the noise of the training samples themselves and affect FL performance. This is the gap that the present work addresses.

We assume full model delivery in the downlink and perfect uplink aggregation to \emph{isolate} the sensing effect. The joint design of sensing, communication, and computation under link impairments and imperfect over-the-air aggregation is treated in the journal version of this work.


\section{System Model}
\label{sec:system-model}
We consider an FL system with a parameter server and $N$ edge devices participating in the model training over $T$ communication rounds. Device~$n \in [N]$ begins with an initial local dataset $\calD_{n,0}$ of size $D_{n,0} = |\calD_{n,0}| \ge 0$. In each round $t \in \{0, 1, \ldots, T-1\}$, device~$n$ senses a fresh batch of $\overline{D}_{n,t}$ noisy samples $\overline{\calD}_{n,t}$ via the sensing process detailed in Section~\ref{sec:sensing}, and appends them to its accumulated dataset:
\begin{align}
  \calD_{n,t} = \calD_{n,t-1} \cup \overline{\calD}_{n,t}, \qquad D_{n,t} = D_{n,t-1} + \overline{D}_{n,t}. \label{eq:accumulation}
\end{align}

Let $F_{n,t}(\btheta) \triangleq \frac{1}{D_{n,t}} \sum_{\bu \in \calD_{n,t}} f(\btheta, \bu)$ denote the local loss at device~$n$ in round~$t$, where $f$ is the per-sample loss function. For a $d$-dimensional global model denoted by $\btheta \in \mathbb{R}^d$, the global loss function to be minimized is
\begin{align}
  F_t(\btheta) \triangleq \sum_{n=1}^{N} w_{n,t}\, F_{n,t}(\btheta), \label{eq:global-loss}
\end{align}
where $w_{n,t} \triangleq \frac{D_{n,t}}{D_t}$ with $D_t = \sum_n D_{n,t}$. Both the weights $w_{n,t}$ and the objective $F_t$ evolve deterministically with the accumulation history $\{\overline{D}_{n,\ell}\}_{\ell \leq t}$.

At the start of round~$t$, the server broadcasts the current global model $\btheta_t$. After acquiring $\overline{\calD}_{n,t}$ and forming $\calD_{n,t}$, device~$n$ performs $\tau$ steps of local mini-batch stochastic gradient descent (SGD) on $F_{n,t}$, initialized at $\btheta_{n,t}^{(1)} = \btheta_t$:
\begin{align}
  \btheta_{n,t}^{(i+1)} = \btheta_{n,t}^{(i)} - \eta\, \nabla F_{n}\!\bigl(\btheta_{n,t}^{(i)}; \bxi_{n,t}^{(i)}\bigr), \quad i \in [\tau], \label{eq:local-sgd}
\end{align}
where $\bxi_{n,t}^{(i)}$ is a mini-batch of fixed size $B$ drawn uniformly without replacement from $\calD_{n,t}$, and $\eta > 0$ is the stepsize. The per-device local update is $\Delta\btheta_{n,t} = \btheta_{n,t}^{(\tau+1)} - \btheta_t$, and the server forms the next global model as
\begin{align}
  \btheta_{t+1} = \btheta_t + \sum_{n=1}^{N} w_{n,t}\, \Delta\btheta_{n,t}. \label{eq:aggregation}
\end{align}

\subsection{Sensing Model}
\label{sec:sensing}
Let $\calM \triangleq \{1, \ldots, M_{\rm mod}\}$ be the set of available sensing modality types at the \emph{network level} (e.g., camera, microphone, LiDAR, RF sensor), and let $\calM_n \subseteq \calM$ denote the subset of modalities that device~$n$ is physically equipped to activate (its on-board sensor suite). A single-sensor device has $|\calM_n| = 1$, while a multi-sensor device can switch among its available modalities on a per-round basis. In each round~$t$, device~$n$ selects exactly one modality from its own suite via binary indicators $z_{n,t}^{(m)} \in \{0,1\}$ satisfying $z_{n,t}^{(m)} = 0$ for $m \notin \calM_n$ and $\sum_{m \in \calM_n} z_{n,t}^{(m)} = 1$, a sensing resolution $q_{n,t} \in \calQ_n = \{q_n^{(1)},\ldots,q_n^{(Q_n)}\}$ with maximum value $q_n^{\max} \triangleq \max\{\calQ_n\}$, and a sensing power level $p_{n,t}^s \in (0, P_{n}^{s,\max}]$, where $P_{n}^{s,\max}$ is the per-device sensing power constraint. 

Let $d_u$ denote the feature dimension. A sensed observation of an underlying clean signal $\bs_{n,t} \in \mathbb{R}^{d_u}$ is modeled as
\begin{align}
  \bx_{n,t} = \bs_{n,t} + \bv_{n,t}, \label{eq:sensing-model}
\end{align}
where $\bv_{n,t} \sim \mathcal{N}\big(\mathbf{0}, \bC_{n,t}^{(m_{n,t})}\big)$ is structured additive measurement noise with covariance
\begin{align}
  \bC_{n,t}^{(m)} = \frac{1}{p_{n,t}^s\, q_{n,t}}\, \bSigma_n^{(m)}, \label{eq:sensing-noise}
\end{align}
where $\bSigma_n^{(m)} \succeq 0$ is the modality-specific intrinsic noise covariance, encoding the directional noise structure of sensor~$m$ at device~$n$ in a shared feature space (after modality-specific preprocessing or feature extraction). Equation~\eqref{eq:sensing-noise} is an effective feature-space approximation rather than an exact raw-sensor law: the inverse scaling with sensing power reflects the improvement of effective measurement quality with acquisition energy, and the inverse scaling with resolution captures the reduction of quantization or reconstruction distortion at finer sensing fidelity. The separable form in~\eqref{eq:sensing-noise} provides a tractable model that preserves the key monotonic dependence of sensing quality on modality, power, and resolution. The isotropic special case $\bSigma_n^{(m)} = c_{1,n}^{(m)} \bI_{d_u}$ with $c_{1,n}^{(m)} > 0$ recovers the scalar noise variance $\bar{\sigma}_{n,t}^2 = c_{1,n}^{(m)}/(p_{n,t}^s q_{n,t})$, which is used in the simulations. The constant $c_{1,n}^{(m)}$ captures the intrinsic noise figure of sensor~$m$ at device~$n$ (e.g., sensor sensitivity, front-end noise floor).

The time to collect $\overline{D}_{n,t}$ samples under modality~$m$ is $T_{n,t}^s = \overline{D}_{n,t}/r_{{\rm s},n}^{(m)}$, where $r_{{\rm s},n}^{(m)}$ is the modality-specific sampling rate (samples per second). The corresponding sensing energy is
\begin{align}
  E_{n,t}^s = p_{n,t}^s\, T_{n,t}^s. \label{eq:sensing-energy}
\end{align}

\begin{remark}
The additive Gaussian noise model in \eqref{eq:sensing-noise} provides a tractable information-theoretic baseline for analyzing the tradeoffs among sensing energy, resolution, and information fidelity. Although some modalities exhibit structured or signal-dependent noise (e.g., Poisson noise in optical sensors or multiplicative noise in radar), the Gaussian assumption offers a unified framework and is worst-case among additive noises with fixed variance, so the resulting floors and achievability conditions provide upper bounds for practical design.
\end{remark}

\begin{remark}
The setup in~\eqref{eq:global-loss}--\eqref{eq:aggregation} requires each device to operate on a common model $\btheta$ despite physically distinct sensors. We therefore assume a pretrained modality-specific feature encoder $\phi^{(m)}: \mathbb{R}^{d_{\rm raw}^{(m)}} \to \mathbb{R}^{d_u}$ for each $m \in \calM$, mapping raw readings to a shared $d_u$-dimensional feature space (as in multimodal foundation models). Accordingly, $\bs_{n,t}$ is the clean feature-space signal, $\bv_{n,t}$ is raw-sensor noise propagated through $\phi^{(m)}$, and $\bSigma_n^{(m)}$ encodes both the intrinsic sensor noise and the encoder-induced geometry.
\end{remark}

\section{Convergence Analysis}
\label{sec:convergence}
We work under the following assumptions.
\begin{enumerate}[leftmargin=*,label=\emph{A\arabic*.}]

\item \emph{$L$-smoothness:} The loss $f(\btheta, \bu)$ is $L$-smooth in $\btheta$ for every $\bu$,
\begin{align}
  \|\nabla f(\btheta, \bu) - \nabla f(\btheta', \bu)\| \le L\|\btheta - \btheta'\|, \;\; \forall\, \btheta, \btheta', \bu. \label{eq:smoothness}
\end{align}
Thus, $F_{n,t}$ and $F_t$ are $L$-smooth in $\btheta$ for any $t$.

\item \emph{Bounded loss range:} There exist $F_{\inf}, F_{\sup} \in \mathbb{R}$ such that $F_{\inf} \le F_t(\btheta) \le F_{\sup}$ for all $\btheta$ and $t$. Let $\Delta_F \triangleq F_{\sup} - F_{\inf}$.

\item \emph{Structured gradient sensitivity:} For each device~$n$, there exists a positive semidefinite data-gradient sensitivity matrix $\bG_n(\btheta) \succeq 0$ such that, for any sample $\bu = \bs + \bv$ with sensing noise $\bv \sim \mathcal{N}(\mathbf{0}, \bC)$,
\begin{align}
  \mathbb{E}_{\bv}\bigl[\|\nabla f(\btheta;\bu) - \nabla f(\btheta;\bs)\|^2\bigr] \le \mathrm{tr}\bigl(\bG_n(\btheta)\,\bC\bigr). \label{eq:input-sensitivity}
\end{align}
This holds when $f(\btheta,\cdot)$ has a bounded cross-Jacobian $\nabla_{\btheta}\nabla_{\mathbf{u}}^\top f$. Under the isotropic case $\bC = \bar{\sigma}^2 \bI_{d_u}$, the bound in~\eqref{eq:input-sensitivity} reduces to $\mathrm{tr}(\bG_n(\btheta))\,\bar{\sigma}^2$.

\item \emph{First-order unbiased sensing perturbation:} Under the additive sensing model~\eqref{eq:sensing-model} with zero-mean noise, we adopt a first-order expansion of the sample gradient with respect to the sensed input. Under this approximation, the per-sample sensing perturbation is unbiased,
\begin{align}
  \mathbb{E}_{\bv}\bigl[\nabla f(\btheta;\bs+\bv) - \nabla f(\btheta;\bs)\bigr] = \mathbf{0}, \label{eq:first-order-unbiased}
\end{align}
which yields $\mathbb{E}[\bgamma_t^s \mid \btheta_t] = \mathbf{0}$ for the aggregated sensing perturbation in Section~\ref{sec:perturbation}.

\item \emph{Mini-batch and dataset-size variance decomposition:} Let
\[
F_n^\infty(\btheta) \triangleq \mathbb{E}_{\bu \sim \mathcal{P}_n}[f(\btheta,\bu)]
\]
denote the underlying device-level population loss. For the mini-batch $\bxi_{n,t}^{(i)}$ drawn uniformly from $\calD_{n,t}$, the stochastic gradient error with respect to the population gradient $\nabla F_n^\infty(\btheta)$ satisfies
\begin{align}
  \mathbb{E}\bigl[\|\nabla F_n(\btheta;\bxi_{n,t}^{(i)}) - \nabla F_n^\infty(\btheta)\|^2\bigr] \le \frac{\sigma_B^2}{B} + \frac{\sigma_{\rm pop}^2}{D_{n,t}}, \label{eq:sgd-variance}
\end{align}
for all $\btheta, n, t, i$. Here, $\sigma_B^2$ bounds the per-sample variance of the stochastic gradient around the empirical mean, and $\sigma_{\rm pop}^2$ bounds the per-sample variance around the population mean. The two need not be identical in general, and we keep them separate to make the dataset-size dependence visible.

\item \emph{Bounded sensitivity:} There exist finite constants \mbox{$\beta_n > 0$} such that the largest eigenvalue of $\bG_n(\btheta)$ satisfies $\lambda_{\max}(\bG_n(\btheta)) \le \beta_n$ for all $\btheta$ and $n$.

\item \emph{Bounded gradient norm:} There exists $G > 0$ such that $\mathbb{E}\|\nabla F_n(\btheta; \calS)\|^2 \le G^2$ for any nonempty $\calS \subseteq \calD_{n,t}$, all $\btheta, n, t$. This bound is standard in non-convex FL convergence analysis~\cite{Liang2025ISCC,Wen-TWC2026-ISCC}.

\item \emph{Bounded gradient heterogeneity:} There exists $\Gamma \ge 0$ such that
\begin{align}
  \mathbb{E}\bigl[\|\nabla F_{n,t}(\btheta) - \nabla F_t(\btheta)\|^2\bigr] \le \Gamma^2, \;\; \forall\, \btheta, n, t. \label{eq:heterogeneity}
\end{align}
This captures the heterogeneity of local empirical losses across devices and is standard in analyses with $\tau \ge 1$ local steps~\cite{Li2020FedProx,Karimireddy-SCAFFOLD}.
\end{enumerate}

\subsection{Perturbation Decomposition}
\label{sec:perturbation}
We write the global update rule as a perturbed full-gradient counterpart of the $\tau$ local SGD steps on the objective $F_t$:
\begin{align}
  \btheta_{t+1} = \btheta_t - \eta\tau \nabla F_t(\btheta_t) + \bzeta_t, \label{eq:perturbed-update}
\end{align}
where $\bzeta_t$ collects all deviations of the actual aggregation update from the ideal $\tau$-step full-gradient reference on $F_t$. Three independent sources of perturbation contribute: sensing noise embedded in the accumulated data, denoted $\bgamma_t^s$; mini-batch variance and client drift from $\tau$ steps of local SGD, denoted $\bgamma_t^c$; and dataset drift caused by appending $\overline{\calD}_{n,t+1}$ to each device's accumulated set, denoted $\bgamma_t^d$. Thus, $\bzeta_t = \bgamma_t^s + \bgamma_t^c + \bgamma_t^d$.

\subsubsection{Sensing noise contribution}
Each sample $\bu \in \calD_{n,t}$ was acquired in some round $\ell \le t$ under sensing configuration $(m_{n,\ell}, p_{n,\ell}^s, q_{n,\ell})$, with sensing noise realizations independent across samples. Defining the noise-gradient alignment coefficient
\begin{align}
  \kappa_n^{(m)}(\btheta) \triangleq \mathrm{tr}\bigl(\bG_n(\btheta)\,\bSigma_n^{(m)}\bigr), \label{eq:kappa-def}
\end{align}
and applying A3 sample-by-sample,
\begin{align}
  \mathbb{E}\bigl[\|\bgamma_{n, t}^{s}\|^2\bigr] \le \frac{1}{D_{n,t}^2} \sum_{\ell=0}^{t} \frac{\overline{D}_{n,\ell}\, \kappa_n^{(m_{n,\ell})}(\btheta_t)}{p_{n,\ell}^s\, q_{n,\ell}}, \label{eq:sensing-per-device}
\end{align}
where $\bgamma_{n, t}^{s}$ is the sensing-induced gradient perturbation at device~$n$. Aggregating across devices with weights $w_{n,t}$,
\begin{align}
  \mathbb{E}\|\bgamma_t^s\|^2 \le \sum_{n=1}^{N} \frac{w_{n,t}}{D_{n,t}^2} \sum_{\ell=0}^{t} \frac{\overline{D}_{n,\ell}\, \kappa_n^{(m_{n,\ell})}(\btheta_t)}{p_{n,\ell}^s\, q_{n,\ell}} \triangleq \mathcal{E}_t^s. \label{eq:sensing-bound}
\end{align}
Under A6, $\kappa_n^{(m)}(\btheta) \le \tilde{\kappa}_n^{(m)} \triangleq \beta_n\,\mathrm{tr}(\bSigma_n^{(m)})$ uniformly in $\btheta$. 


\subsubsection{Mini-batch and local-step contribution}
With $\tau$ local mini-batch SGD steps per round, the variance decomposition in A5, and gradient heterogeneity $\Gamma$ in A8, the local-SGD perturbation analysis~\cite{Li2020FedProx,Karimireddy-SCAFFOLD} gives
\begin{align}
  \mathbb{E}\|\bgamma_t^c\|^2 \le \eta^2\tau \!\sum_{n=1}^{N}\! w_{n,t}\!\left(\!\frac{\sigma_B^2}{B} + \frac{\sigma_{\rm pop}^2}{D_{n,t}}\!\right) + \eta^2\tau^2\Gamma^2 \triangleq \mathcal{E}_t^c. \label{eq:comp-bound}
\end{align}
The first term denotes the fixed mini-batch variance, while the second reflects the empirical-to-population discrepancy and decreases as the accumulated dataset grows. The last term is the client-drift contribution arising from local descent on heterogeneous $F_{n,t}$ rather than the global $F_t$. 

\subsubsection{Dataset drift contribution}
Appending $\overline{\calD}_{n,t+1}$ to each device's accumulated dataset changes the global objective from $F_t$ to $F_{t+1}$. Following the decomposition technique of~\cite[Lemma~1]{Liang2025ISCC}, the gradient of $F_{t+1}$ at any $\btheta$ admits the convex combination
\begin{align}
  \nabla &F_{t+1}(\btheta) \nonumber\\
  &= \!\sum_{n=1}^{N}\,\Bigl[ \tfrac{D_{n,t}}{D_{t+1}} \nabla F_{n,t}(\btheta) + \tfrac{\overline{D}_{n,t+1}}{D_{t+1}} \nabla F_n(\btheta; \overline{\calD}_{n,t+1}) \Bigr]. \label{eq:lemma1-decomp}
\end{align}
The model update $\btheta_t \to \btheta_{t+1}$ is driven by the round-$t$ gradients (over $\calD_{n,t}$), but the round-$(t+1)$ objective $F_{t+1}$ is partly determined by the freshly sensed batches $\overline{\calD}_{n,t+1}$. The residual $\bgamma_t^d \triangleq \nabla F_{t+1}(\btheta_t) - \nabla F_t(\btheta_t)$ captures this drift, and is bounded under A7 by
\begin{align}
  \mathbb{E}\|\bgamma_t^d\|^2 \le \!\sum_{n=1}^{N} \frac{2\overline{D}_{n,t+1}}{D_{t+1}}\, G^2 \triangleq \mathcal{E}_t^d. \label{eq:drift-bound}
\end{align}

\subsubsection{Total perturbation power}
By A4, $\bgamma_t^s$ is conditionally zero-mean, while $\bgamma_t^c$ may have a nonzero conditional mean due to local-model drift and client heterogeneity, and $\bgamma_t^d$ is deterministic given $\btheta_t$. Thus the cross terms involving $\bgamma_t^s$ vanish under expectation, giving $\mathbb{E}\|\bzeta_t\|^2 = \mathbb{E}\|\bgamma_t^s\|^2 + \mathbb{E}\|\bgamma_t^c + \bgamma_t^d\|^2$. Applying $\|a+b\|^2 \le 2\|a\|^2 + 2\|b\|^2$ to the second term,
\begin{align}
  \mathbb{E}\|\bzeta_t\|^2 \le \mathcal{E}_t^s + 2(\mathcal{E}_t^c + \mathcal{E}_t^d) \triangleq \Phi_t. \label{eq:total-perturbation}
\end{align}
All three components $\mathcal{E}_t^s$, $\mathcal{E}_t^c$, and $\mathcal{E}_t^d$ are now controllable through the sensing decisions $\{(m_{n,\ell}, p_{n,\ell}^s, q_{n,\ell}, \overline{D}_{n,\ell})\}_{\ell \le t}$ via the accumulated dataset sizes $D_{n,\ell}$, and they exhibit a nontrivial joint tradeoff in $\overline{D}_{n,\ell}$ formalized in Section~\ref{sec:implications}.

\subsection{Main Convergence Result}

\begin{theorem}
\label{thm:convergence}
Under assumptions A1--A8, let the global model evolve according to~\eqref{eq:aggregation}, and let $\Phi_t$ be the per-round perturbation power from~\eqref{eq:total-perturbation}. Define $\delta_t \triangleq \Delta_F \sum_{n=1}^{N} \overline{D}_{n,t+1}/D_{t+1}$, where $\Delta_F = F_{\sup} - F_{\inf}$ from A2. If the stepsize satisfies $0 < \eta\tau \le 1/L$, then
\vspace{-0.05in}
\begin{align}
  \frac{1}{T}&\sum_{t=0}^{T-1}  \mathbb{E}\bigl\|\nabla F_t(\btheta_t)\bigr\|^2 \le \frac{4\bigl(\mathbb{E}[F_0(\btheta_0)] - F_{\inf}\bigr)}{\eta\tau T} \notag \\
  & + \frac{2L}{\eta\tau T}\sum_{t=0}^{T-1}\Phi_t + \frac{8}{\eta^2\tau^2 T}\sum_{t=0}^{T-1}\bigl(\mathcal{E}_t^c + \mathcal{E}_t^d\bigr) + \frac{4}{\eta\tau T}\sum_{t=0}^{T-1}\delta_t. \label{eq:main-bound}
\end{align}
\end{theorem}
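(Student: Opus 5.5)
The plan is a standard one-step descent argument on the time-varying objective, followed by telescoping. There is one extra step that handles the objective switch $F_t\to F_{t+1}$. First apply $L$-smoothness (A1) of $F_t$ to the perturbed update~\eqref{eq:perturbed-update}:
\begin{align*}
F_t(\btheta_{t+1}) \le F_t(\btheta_t) - \eta\tau\|\nabla F_t(\btheta_t)\|^2 + \langle \nabla F_t(\btheta_t), \bzeta_t\rangle + \tfrac{L}{2}\|{-\eta\tau\nabla F_t(\btheta_t)} + \bzeta_t\|^2 .
\end{align*}
I would expand the quadratic with $\|a+b\|^2\le 2\|a\|^2+2\|b\|^2$. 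This gives $L\eta^2\tau^2\|\nabla F_t\|^2 + L\|\bzeta_t\|^2$. Under $\eta\tau\le 1/L$, the first piece is absorbed, leaving a net coefficient of at most $-\eta\tau/2$ after the inner-product step below. The term $L\,\mathbb{E}\|\bzeta_t\|^2\le L\Phi_t$ by~\eqref{eq:total-perturbation}, and it produces the $\frac{2L}{\eta\tau T}\sum_t\Phi_t$ term after normalisation.

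The inner product is the main obstacle, since only $\bgamma_t^s$ is conditionally zero-mean (A4). Conditioning on $\btheta_t$ kills $\langle\nabla F_t,\bgamma_t^s\rangle$. For the biased remainder $\bgamma_t^c+\bgamma_t^d$, I would use Young's inequality $\langle a,b\rangle\le \frac{\eta\tau}{4}\|a\|^2+\frac{1}{\eta\tau}\|b\|^2$. Combined with $\|b\|^2\le 2(\|\bgamma_t^c\|^2+\|\bgamma_t^d\|^2)$, this yields $\frac{\eta\tau}{4}\|\nabla F_t\|^2+\frac{2}{\eta\tau}(\mathcal{E}_t^c+\mathcal{E}_t^d)$. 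The net coefficient on $\mathbb{E}\|\nabla F_t(\btheta_t)\|^2$ is then at least $\eta\tau/4$ after absorbing the smoothness term; I would tune the constants to get exactly $1/4$. Dividing by $\eta\tau/4$ converts the Young remainder into the $\frac{8}{\eta^2\tau^2}(\mathcal{E}_t^c+\mathcal{E}_t^d)$ term and the optimality gap into the factor $4/(\eta\tau)$. Some slack in the constants (e.g.\ $2L$ versus $4L$) may need recounting, but the structure matches~\eqref{eq:main-bound}.

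Next, the objective switch. Using the decomposition~\eqref{eq:lemma1-decomp} at the level of function values,
\[
F_{t+1}(\btheta)-F_t(\btheta)=\sum_n \tfrac{\overline{D}_{n,t+1}}{D_{t+1}}\bigl(F_n(\btheta;\overline{\calD}_{n,t+1})-F_t(\btheta)\bigr)\quad\text{(plus reweighting)} .
\]
Each difference lies in $[-\Delta_F,\Delta_F]$ by A2, so $F_{t+1}(\btheta_{t+1})\le F_t(\btheta_{t+1})+\delta_t$. I would check that the reweighting of old local losses is indeed absorbed into the same convex combination, which it is since $D_{n,t}/D_{t+1}$ and $\overline{D}_{n,t+1}/D_{t+1}$ sum to one. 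Substituting gives a recursion in $\mathbb{E}F_{t+1}(\btheta_{t+1})-\mathbb{E}F_t(\btheta_t)$. Summing over $t=0,\dots,T-1$ telescopes it, and lower-bounding $F_T(\btheta_T)\ge F_{\inf}$ and dividing by $T\eta\tau/4$ gives~\eqref{eq:main-bound}, with the $\delta_t$ term scaled by $4/(\eta\tau T)$.
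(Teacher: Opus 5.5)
Your architecture matches the paper's: smoothness descent on $F_t$, conditioning away $\bgamma_t^s$, Young's inequality with weight $\eta\tau/4$ on the biased part, the switch $F_{t+1}\le F_t+\delta_t$, and telescoping. However, the first step fails under the stated hypothesis $\eta\tau\le 1/L$.

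Bounding the smoothness quadratic with $\|a+b\|^2\le 2\|a\|^2+2\|b\|^2$ leaves the coefficient $-\eta\tau(1-L\eta\tau)$ on $\|\nabla F_t(\btheta_t)\|^2$, not $-\eta\tau/2$. This coefficient is merely nonpositive on $(0,1/L]$ and vanishes at $\eta\tau=1/L$. After Young's inequality adds $+\frac{\eta\tau}{4}\|\nabla F_t(\btheta_t)\|^2$, the net coefficient is $-\eta\tau\bigl(\tfrac34-L\eta\tau\bigr)$. That is positive once $L\eta\tau>3/4$, so there is nothing to telescope; reaching $-\eta\tau/4$ requires $\eta\tau\le 1/(2L)$.

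Tuning the split does not recover the stated constants. A split $(1+\alpha)\|a\|^2+(1+1/\alpha)\|b\|^2$ gives net coefficient $-\eta\tau(\tfrac14-\tfrac{\alpha}{2})$ at $\eta\tau=1/L$, and it inflates the $\|\bzeta_t\|^2$ coefficient to $\tfrac{L}{2}(1+1/\alpha)$. Separately, your $L\|\bzeta_t\|^2$ term becomes $\frac{4L}{\eta\tau T}\sum_t\Phi_t$ after dividing by $\eta\tau T/4$, not $\frac{2L}{\eta\tau T}\sum_t\Phi_t$ as you state.

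The missing idea is to expand the square exactly:
\[
\tfrac{L}{2}\bigl\|-\eta\tau\nabla F_t(\btheta_t)+\bzeta_t\bigr\|^2=\tfrac{L\eta^2\tau^2}{2}\|\nabla F_t(\btheta_t)\|^2-L\eta\tau\,\nabla F_t(\btheta_t)^\top\bzeta_t+\tfrac{L}{2}\|\bzeta_t\|^2 .
\]
The cross term then merges with your linear term into $(1-L\eta\tau)\nabla F_t(\btheta_t)^\top\bzeta_t$. The gradient coefficient becomes $-\eta\tau(1-L\eta\tau/2)\le-\eta\tau/2$ for every $\eta\tau\le 1/L$, and the perturbation coefficient stays $L/2$. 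Conditioning still kills the $\bgamma_t^s$ part. Because $0\le 1-L\eta\tau\le 1$, your Young step bounds the remainder by $\frac{\eta\tau}{4}\|\nabla F_t(\btheta_t)\|^2+\frac{2}{\eta\tau}\bigl(\|\bgamma_t^c\|^2+\|\bgamma_t^d\|^2\bigr)$. The net coefficient is exactly $-\eta\tau/4$, and dividing by $\eta\tau T/4$ gives the constants $4$, $2L$, $8$, $4$ of the theorem. This is the paper's route.

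The rest of your argument is fine. Applying Young pointwise to $\bgamma_t^c$ and then using $\mathbb{E}\|\bgamma_t^c\|^2\le\mathcal{E}_t^c$ is equivalent to the paper's Jensen step. Your identity $F_{t+1}-F_t=\sum_n\frac{\overline{D}_{n,t+1}}{D_{t+1}}\bigl(F_n(\cdot;\overline{\calD}_{n,t+1})-F_t\bigr)$ is exact, with no extra reweighting term. Bounding each difference by $\Delta_F$ tacitly assumes the batch losses also lie in the A2 range, exactly as the paper does.
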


\begin{proof}
By A1, $F_t$ is $L$-smooth in $\btheta$. Applying~\eqref{eq:smoothness} with $\btheta' = \btheta_{t+1}$ and substituting $\btheta_{t+1} - \btheta_t = -\eta\tau\nabla F_t(\btheta_t) + \bzeta_t$,
\begin{align}
  F_t(\btheta_{t+1}) &\le F_t(\btheta_t) - \eta\tau\bigl(1 - \tfrac{L\eta\tau}{2}\bigr)\|\nabla F_t(\btheta_t)\|^2 \notag \\ 
  &\quad + (1 - L\eta\tau)\,\nabla F_t(\btheta_t)^\top \bzeta_t + \tfrac{L}{2}\|\bzeta_t\|^2.
\end{align}
Conditioning on $\btheta_t$, the sensing perturbation satisfies $\mathbb{E}[\bgamma_t^s \mid \btheta_t] = \mathbf{0}$ by~\eqref{eq:first-order-unbiased}, while $\bar{\bgamma}_t \triangleq \mathbb{E}[\bgamma_t^c \mid \btheta_t] + \bgamma_t^d$ is generally nonzero (client drift plus deterministic dataset shift). For $\eta\tau \le 1/L$, $0 \le 1 - L\eta\tau \le 1$, and applying Cauchy--Schwarz with AM--GM weight $\eta\tau/2$,
\begin{align}
  (1 - L\eta\tau)\nabla F_t(\btheta_t)^\top \bar{\bgamma}_t \le \tfrac{\eta\tau}{4}\|\nabla F_t(\btheta_t)\|^2 + \tfrac{1}{\eta\tau}\|\bar{\bgamma}_t\|^2.
\end{align}
Since $\eta\tau(1 - L\eta\tau/2) \ge \eta\tau/2$ for $\eta\tau \le 1/L$, the net coefficient of $\|\nabla F_t(\btheta_t)\|^2$ is at most $-\eta\tau/4$. Taking total expectation, using Jensen's inequality $\mathbb{E}\|\mathbb{E}[\bgamma_t^c \mid \btheta_t]\|^2 \le \mathbb{E}\|\bgamma_t^c\|^2 \le \mathcal{E}_t^c$, and $\|\bar{\bgamma}_t\|^2 \le 2\|\mathbb{E}[\bgamma_t^c\mid\btheta_t]\|^2 + 2\|\bgamma_t^d\|^2$,
\begin{align}
  \mathbb{E}[F_t(\btheta_{t+1})] \le \mathbb{E}[F_t(\btheta_t)] &- \tfrac{\eta\tau}{4}\mathbb{E}\|\nabla F_t(\btheta_t)\|^2 \notag \\ 
  & + \tfrac{2(\mathcal{E}_t^c + \mathcal{E}_t^d)}{\eta\tau} + \tfrac{L\Phi_t}{2}. \label{eq:descent-final}
\end{align}
By A2 and~\eqref{eq:lemma1-decomp}, $|F_{t+1}(\btheta) - F_t(\btheta)| \le \delta_t$ for all $\btheta$, so $\mathbb{E}[F_{t+1}(\btheta_{t+1})] \le \mathbb{E}[F_t(\btheta_{t+1})] + \delta_t$. Combining with~\eqref{eq:descent-final} and telescoping over $t = 0, \ldots, T-1$, the $\mathbb{E}[F_t(\btheta_t)]$ terms cancel and we obtain
\begin{align}
  \mathbb{E}[F_T(\btheta_T)] &- \mathbb{E}[F_0(\btheta_0)] \le -\tfrac{\eta\tau}{4}\sum_{t=0}^{T-1}\mathbb{E}\|\nabla F_t(\btheta_t)\|^2 \notag \\
  & + \tfrac{2}{\eta\tau}\sum_{t=0}^{T-1}(\mathcal{E}_t^c + \mathcal{E}_t^d) + \tfrac{L}{2}\sum_{t=0}^{T-1}\Phi_t + \sum_{t=0}^{T-1}\delta_t.
\end{align}
Using $\mathbb{E}[F_T(\btheta_T)] \ge F_{\inf}$ from A2, rearranging, and dividing by $\eta\tau T/4$ yields~\eqref{eq:main-bound}.
\end{proof}


\subsection{Implications of the Sensing Floor}
\label{sec:implications}

The sensing floor $\mathcal{E}_t^s$ in~\eqref{eq:sensing-bound} couples the noise covariance $\bSigma_n^{(m)}$ to the loss-sensitivity geometry $\bG_n(\btheta_t)$ through the alignment coefficient $\kappa_n^{(m)}(\btheta_t)$, and aggregates contributions from all rounds $\ell \le t$ in which samples were acquired. 

\begin{proposition}
\label{prop:alignment}
The round-$t$ modality $m_{n,t}$ contributes $\frac{w_{n,t}\, \overline{D}_{n,t}\, \kappa_n^{(m_{n,t})}(\btheta_t)}{D_{n,t}^2\, p_{n,t}^s\, q_{n,t}}$ to $\mathcal{E}_t^s$. For modalities $m_1, m_2 \in \calM_n$ with equal total noise power $\mathrm{tr}(\bSigma_n^{(m_1)}) = \mathrm{tr}(\bSigma_n^{(m_2)})$, $m_1$ strictly dominates $m_2$ if and only if $\mathrm{tr}(\bG_n(\btheta_t)\bSigma_n^{(m_1)}) < \mathrm{tr}(\bG_n(\btheta_t)\bSigma_n^{(m_2)})$. Hence, modality selection should be based on the noise-gradient alignment $\kappa_n^{(m)}(\theta_t)$, rather than on total noise power alone.
\end{proposition}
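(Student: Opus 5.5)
The plan is to read off the round-$t$ term directly from the definition of $\mathcal{E}_t^s$ in~\eqref{eq:sensing-bound}, and then compare two candidate modalities with all other decisions held fixed.

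\textbf{Step 1 (isolating the round-$t$ contribution).} In~\eqref{eq:sensing-bound}, $\mathcal{E}_t^s$ is a double sum over devices $n$ and acquisition rounds $\ell \le t$. Take the summand with $\ell = t$ for device $n$. It equals $\frac{w_{n,t}\,\overline{D}_{n,t}\,\kappa_n^{(m_{n,t})}(\btheta_t)}{D_{n,t}^2\,p_{n,t}^s\,q_{n,t}}$, which is the claimed contribution. All the other summands do not depend on the choice $m_{n,t}$:
\begin{itemize}
\item terms with $\ell<t$ involve only earlier modalities $m_{n,\ell}$;
\item terms from other devices $n'\neq n$ involve only their own configurations;
\item $w_{n,t}$ and $D_{n,t}$ depend only on sample counts, which are fixed by~\eqref{eq:accumulation}.
\end{itemize}
So, holding $(p_{n,t}^s,q_{n,t},\overline{D}_{n,t})$ and the history fixed, the modality choice affects $\mathcal{E}_t^s$ only through $\kappa_n^{(m_{n,t})}(\btheta_t)$. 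Here I interpret ``$m_1$ strictly dominates $m_2$'' to mean that selecting $m_1$ yields a strictly smaller $\mathcal{E}_t^s$, and hence a strictly smaller $\Phi_t$ and a strictly smaller right-hand side of~\eqref{eq:main-bound} in Theorem~\ref{thm:convergence}.

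\textbf{Step 2 (the equivalence).} The difference in $\mathcal{E}_t^s$ between choosing $m_1$ and choosing $m_2$ is
\[
c\bigl(\kappa_n^{(m_1)}(\btheta_t)-\kappa_n^{(m_2)}(\btheta_t)\bigr), \qquad c \triangleq \frac{w_{n,t}\overline{D}_{n,t}}{D_{n,t}^2 p_{n,t}^s q_{n,t}}.
\]
If $\overline{D}_{n,t}>0$ (and hence $w_{n,t}>0$), then $c>0$, since $p^s>0$ and $q>0$. The sign of the difference is therefore the sign of $\mathrm{tr}(\bG_n\bSigma_n^{(m_1)})-\mathrm{tr}(\bG_n\bSigma_n^{(m_2)})$, which gives both directions of the ``if and only if''. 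I need to state the nondegeneracy assumption $\overline{D}_{n,t}>0$ explicitly, because when $\overline{D}_{n,t}=0$ the modality choice is irrelevant. The equal-trace hypothesis is not used in the equivalence itself; it only serves the final interpretive claim.

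\textbf{Step 3 (why total noise power is insufficient).} Under the equal-trace hypothesis, both modalities have the same isotropic-style proxy $\tilde\kappa_n^{(m)}=\beta_n\mathrm{tr}(\bSigma_n^{(m)})$, so that proxy cannot separate them. I will show that $\kappa$ can still differ by exhibiting an example. Let $\bG_n=\mathrm{diag}(\beta_n,0,\dots,0)$, let $\bSigma^{(m_1)}=\sigma^2 \mathbf{e}_2\mathbf{e}_2^\top$, and let $\bSigma^{(m_2)}=\sigma^2 \mathbf{e}_1\mathbf{e}_1^\top$. The two traces are equal, but $\kappa^{(m_1)}=0<\beta_n\sigma^2=\kappa^{(m_2)}$. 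So selection based on alignment strictly differs from selection based on noise power.

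\textbf{Main obstacle.} There is no real technical difficulty here. The only delicate point is making the notion of ``dominates'' precise: it is a comparison of a bound term, with the other configuration variables held fixed. I will also note that $\kappa$ depends on $\btheta_t$, so the ordering of modalities may change from round to round.
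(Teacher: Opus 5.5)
Your proposal is correct and follows the paper's implicit argument: the paper gives no separate proof and reads the claim directly off the $\ell=t$ summand of \eqref{eq:sensing-bound}, with the remaining decisions and history held fixed. Your explicit requirement $\overline{D}_{n,t}>0$, and your remark that the equal-trace hypothesis serves only the interpretive claim, are accurate refinements.

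The one step to drop is ``hence a strictly smaller right-hand side of \eqref{eq:main-bound}''. The round-$t$ samples remain in $\calD_{n,t'}$ for every $t'>t$ and enter $\mathcal{E}_{t'}^s$ through $\kappa_n^{(m_{n,t})}(\btheta_{t'})$, whose ordering can flip, as you note yourself. The equal-trace surrogates $\tilde{\kappa}_n^{(m)}$ cannot rescue this comparison, since they coincide for $m_1$ and $m_2$. So domination is established only for $\mathcal{E}_t^s$ and $\Phi_t$.
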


\begin{proposition}[Sample count tradeoff]
\label{prop:sample-tradeoff}
The per-round sample count $\overline{D}_{n,t}$ enters the convergence bound~\eqref{eq:main-bound} through two opposing channels:
\begin{enumerate}
\item \emph{Statistical benefit:} Larger $\overline{D}_{n,t}$ enlarges $D_{n,t}$, which (i) decreases the dataset-size-dependent term in $\mathcal{E}_t^c$ through the $\sigma_{\rm pop}^2/D_{n,t}$ component of A5, and (ii) increases $D_{n,t}^2$ in the sensing-floor denominator, diluting the per-sample noise contribution to $\mathcal{E}_t^s$ across the accumulated dataset.
\item \emph{Acquisition and drift cost:} Larger $\overline{D}_{n,t}$ increases the sensing energy $E_{n,t}^s = p_{n,t}^s\, \overline{D}_{n,t}/r_{{\rm s},n}^{(m_{n,t})}$, lengthens the sensing time $T_{n,t}^s$, and increases the round-$(t-1)$ drift bridge $\delta_{t-1} \propto \sum_n \overline{D}_{n,t}/D_t$.
\end{enumerate}
The optimal $\overline{D}_{n,t}$ balances these competing effects, captured by the joint optimization in Section~\ref{sec:optimization}.
\end{proposition}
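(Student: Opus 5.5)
The plan is to read Proposition~\ref{prop:sample-tradeoff} as a monotonicity statement about how each term of~\eqref{eq:main-bound} depends on $\overline{D}_{n,t}$, holding all other decisions fixed. First I list every place $\overline{D}_{n,t}$ appears. It appears directly in $D_{n,t} = D_{n,t-1}+\overline{D}_{n,t}$ through~\eqref{eq:accumulation}. It appears in $D_t$ and hence in the weights $w_{n,t}$. It appears in the summands of $\mathcal{E}_t^s$ in~\eqref{eq:sensing-bound} for rounds $\ell=t$. It appears in $\mathcal{E}_{t-1}^d$ and $\delta_{t-1}$ via the ratio $\overline{D}_{n,t}/D_t$. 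Finally, it appears in the energy and time models~\eqref{eq:sensing-energy}. Each of these channels is then handled separately.

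For the statistical benefit, I differentiate or compare term by term. (i) The summand $w_{n,t}\sigma_{\rm pop}^2/D_{n,t}$ in~\eqref{eq:comp-bound} equals $\sigma_{\rm pop}^2/D_t$. This quantity is strictly decreasing in $\overline{D}_{n,t}$, and the same holds for the sum over $n$, which is $N\sigma_{\rm pop}^2/D_t$. (ii) For the sensing floor, I write the device-$n$ contribution as
\begin{equation*}
\frac{w_{n,t}}{D_{n,t}^2}\Bigl(A_{n,t-1} + \overline{D}_{n,t}\,c_{n,t}\Bigr),
\end{equation*}
where $A_{n,t-1}$ is the already-accumulated noise mass and $c_{n,t} = \kappa_n^{(m_{n,t})}/(p^s_{n,t}q_{n,t})$. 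The per-sample factor $1/D_{n,t}^2$ shrinks as $\overline{D}_{n,t}$ grows, which is the dilution claim. If the noise mass were spread uniformly, the whole term would scale like $1/(D_t D_{n,t})$. For the drift cost, $x\mapsto x/(D_{t-1}+x+\text{const})$ is increasing, so $\delta_{t-1}$ and $\mathcal{E}^d_{t-1}$ grow with $\overline{D}_{n,t}$. The energy $E^s_{n,t}$ and the time $T^s_{n,t}$ are linear and increasing by definition.

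The main obstacle is the sensing-floor channel. The added numerator $\overline{D}_{n,t}c_{n,t}$ also grows, and the weights $w_{n,t}$ grow too, so the sign of $\partial\mathcal{E}_t^s/\partial\overline{D}_{n,t}$ is not automatically negative. To settle it, I plan to compute the derivative explicitly. Under the stated interpretation, the dilution holds whenever $c_{n,t}$ is not much larger than the running average noise level $A_{n,t-1}/D_{n,t-1}$. I will state the benefit as the per-sample dilution effect in the $D_{n,t}^2$ denominator, rather than as global monotonicity, matching the wording of the proposition. Since the proposition is qualitative, the argument concludes by noting that these opposing monotone effects enter~\eqref{eq:main-bound} additively. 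An interior optimum therefore generally exists, and it is deferred to the joint optimization.
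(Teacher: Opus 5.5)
Your proposal is correct in substance and follows the paper's route. The paper gives no separate proof; it simply reads off how the terms in \eqref{eq:comp-bound}, \eqref{eq:sensing-bound}, \eqref{eq:drift-bound}, $\delta_{t-1}$ and \eqref{eq:sensing-energy} depend on $\overline{D}_{n,t}$. Two of your points go beyond the paper: the identity $\sum_n w_{n,t}\sigma_{\rm pop}^2/D_{n,t}=N\sigma_{\rm pop}^2/D_t$, and the caveat that $\mathcal{E}_t^s$ need not decrease in $\overline{D}_{n,t}$. The caveat is warranted, since the paper's own Section~\ref{sec:optimization} treats the round-$t$ floor term as \emph{increasing} in $\overline{D}_{n,t}$. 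Restricting claim (ii) to the dilution of the per-sample weight is therefore the right reading.

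Two slips should be fixed. In the drift step, the summands $\overline{D}_{k,t}/D_t$ with $k\neq n$ \emph{decrease} in $\overline{D}_{n,t}$. So monotonicity of the device-$n$ summand alone does not give monotonicity of $\delta_{t-1}$. Use instead $\delta_{t-1}=\Delta_F\,\overline{D}_t/(D_{t-1}+\overline{D}_t)$ with $\overline{D}_t\triangleq\sum_k\overline{D}_{k,t}$, which is increasing (strictly if $D_{t-1}>0$); the same argument covers $\mathcal{E}_{t-1}^d$. In the floor step, note that $w_{n,t}/D_{n,t}^2=1/(D_tD_{n,t})$. Hence, under an equal per-sample noise level $c$, the device-$n$ term is exactly $c/D_t$, not of order $1/(D_tD_{n,t})$; this also gives a clean dilution statement.

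For the derivative you plan to compute, let $D_t'\triangleq D_t-\overline{D}_{n,t}$, which is fixed in $\overline{D}_{n,t}$. The numerator of the derivative of the device-$n$ term is then $c_{n,t}(D_t'D_{n,t-1}-\overline{D}_{n,t}^2)-A_{n,t-1}(D_{n,t-1}+D_t'+2\overline{D}_{n,t})$. So $c_{n,t}\le A_{n,t-1}/D_{n,t-1}$ suffices for that term to be nonincreasing, and the other devices' floor terms only shrink as $D_t$ grows; this confirms your heuristic.

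Finally, your list of appearances is not exhaustive. $\overline{D}_{n,t}$ also enters every later round through $D_{n,\ell}$ and $D_\ell$; for example, it lowers $\delta_\ell$ for all $\ell\ge t$. The proposition ignores this effect as well.
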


\begin{corollary}
\label{cor:floor}
Under the conditions of Theorem~\ref{thm:convergence}, for any feasible sensing configuration with $0 < p_{n,\ell}^s \le P_n^{s,\max}$ and $q_{n,\ell} \le q_n^{\max}$ for $\ell = 0, \ldots, t$,
\begin{align}
  \mathcal{E}_t^s \ge \!\sum_{n=1}^{N}\!\frac{w_{n,t}}{D_{n,t}^2 P_n^{s,\max} q_n^{\max}}\!\sum_{\ell=0}^{t}\!\overline{D}_{n,\ell}\, \kappa_n^{(m_{n,\ell})}(\btheta_t) > 0,\!\!\! \label{eq:floor-lower-bound}
\end{align}
whenever $\kappa_n^{(m)}(\btheta_t) > 0$ for any active modality at any device, i.e., whenever the sensor covariance shares a nonzero eigenvector with the loss-sensitivity matrix. Under bounded total accumulation, the floor term in~\eqref{eq:main-bound} is governed by sensor hardware constraints and cannot be removed by algorithmic choices alone.
\end{corollary}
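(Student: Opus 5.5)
The plan is a termwise monotonicity argument on the definition of $\mathcal{E}_t^s$ in~\eqref{eq:sensing-bound}. Every factor in each summand is nonnegative: $w_{n,t} \ge 0$, $D_{n,t} > 0$ for any device that holds data, $\overline{D}_{n,\ell} \ge 0$, and $p_{n,\ell}^s, q_{n,\ell} > 0$. The alignment coefficient $\kappa_n^{(m)}(\btheta_t) = \mathrm{tr}(\bG_n(\btheta_t)\bSigma_n^{(m)})$ is the trace of a product of two positive semidefinite matrices. I would justify its nonnegativity by writing it as $\mathrm{tr}(\bG_n^{1/2}\bSigma_n^{(m)}\bG_n^{1/2})$, which is the trace of a PSD matrix. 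Given this, each summand is a nonnegative quantity divided by $p_{n,\ell}^s q_{n,\ell}$.

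Next I would use feasibility. The constraints $0 < p_{n,\ell}^s \le P_n^{s,\max}$ and $0 < q_{n,\ell} \le q_n^{\max}$ give $1/(p_{n,\ell}^s q_{n,\ell}) \ge 1/(P_n^{s,\max} q_n^{\max})$. Multiplying this inequality by the nonnegative numerator $w_{n,t}\overline{D}_{n,\ell}\kappa_n^{(m_{n,\ell})}(\btheta_t)/D_{n,t}^2$ and summing over $\ell$ and $n$ yields the first inequality in~\eqref{eq:floor-lower-bound}. The factor $1/(P_n^{s,\max}q_n^{\max})$ does not depend on $\ell$, so it can be pulled out of the inner sum.

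For strict positivity, I would argue as follows. Suppose some device $n$ has $w_{n,t} > 0$ and some round $\ell$ has $\overline{D}_{n,\ell} > 0$ with $\kappa_n^{(m_{n,\ell})}(\btheta_t) > 0$. Then that single term of the lower bound is strictly positive, and every other term is nonnegative, so the whole bound is strictly positive. I would also tie the condition $\kappa > 0$ to the stated geometric condition. By the factorization above, $\kappa = 0$ if and only if $\bSigma_n^{(m)1/2}\bG_n^{1/2} = 0$, which is equivalent to $\bG_n\bSigma_n^{(m)} = 0$, meaning the two matrices have mutually orthogonal ranges. Hence $\kappa > 0$ exactly when their ranges are not orthogonal; the phrase ``shares a nonzero eigenvector'' should be read in this sense.

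The main subtlety is interpretive rather than technical. The hypothesis needs implicit nondegeneracy, namely that the active modality was actually used to acquire samples ($\overline{D}_{n,\ell} > 0$) at a device with positive weight. I would state this explicitly.

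The closing claim is that the floor cannot be removed algorithmically under bounded accumulation. To support it, I would observe that the lower bound depends only on the hardware limits $P_n^{s,\max}$ and $q_n^{\max}$, the modality geometry, and the sample counts. It does not depend on $\eta$, $\tau$, or $B$. It enters~\eqref{eq:main-bound} through $\Phi_t \ge \mathcal{E}_t^s$. If the total accumulation $D_{n,t}$ stays bounded, the prefactor $\sum_\ell \overline{D}_{n,\ell}/D_{n,t}^2 = 1/D_{n,t}$ (when $D_{n,0}=0$) stays bounded away from zero, and so does the floor.
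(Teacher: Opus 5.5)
Your proposal is correct and follows the paper's argument: the paper substitutes $p_{n,\ell}^s = P_n^{s,\max}$ and $q_{n,\ell} = q_n^{\max}$ into~\eqref{eq:sensing-bound} and cites $\kappa_n^{(m)}>0$ for strict positivity. You go further in three places. You justify $\kappa_n^{(m)} \ge 0$ via $\mathrm{tr}(\bG_n^{1/2}\bSigma_n^{(m)}\bG_n^{1/2})$, which the monotonicity step needs, and you state the nondegeneracy condition $\overline{D}_{n,\ell}>0$ that positivity also needs. You also correctly replace the paper's ``shares a nonzero eigenvector'' by the accurate condition that the ranges of $\bG_n(\btheta_t)$ and $\bSigma_n^{(m)}$ are not orthogonal. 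The paper does not prove the closing bounded-accumulation remark either, and keeping the floor bounded away from zero over $t$ also requires $\kappa_n^{(m_{n,\ell})}(\btheta_t)$ to stay bounded below along the trajectory, since A6 only bounds it from above.
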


\begin{proof}
Substituting the maximum feasible values $p_{n,\ell}^s = P_n^{s,\max}$ and $q_{n,\ell} = q_n^{\max}$ for all $\ell \le t$ into~\eqref{eq:sensing-bound} yields the lower bound. The right-hand side (RHS) is strictly positive by the assumption on $\kappa_n^{(m)}$.
\end{proof}

Define the time-averaged minimum sensing floor (using A6 and the best modality available at each device) as
\begin{align}
  \bar{\mathcal{E}}^{s,\min} \triangleq \frac{1}{T}\sum_{t=0}^{T-1} \!\sum_{n=1}^{N}\!\frac{w_{n,t}\, \min_{m \in \calM_n} \tilde{\kappa}_n^{(m)}}{D_{n,t}^2\, P_n^{s,\max}\, q_n^{\max}}\!\sum_{\ell=0}^{t}\!\overline{D}_{n,\ell}. \label{eq:Esmin}
\end{align}

\begin{theorem}
\label{thm:achievability}
Under the conditions of Theorem~\ref{thm:convergence}, define the time-averaged controllable floor $\bar{\mathcal{E}}^c \triangleq \frac{1}{T}\sum_t (\mathcal{E}_t^c + \mathcal{E}_t^d)$ and the time-averaged drift $\bar{\delta} \triangleq \frac{1}{T}\sum_t \delta_t$. A necessary condition for $\epsilon$-stationarity to be attainable for any $T$ is
\begin{align}
  2L\,\bar{\mathcal{E}}^{s,\min} + \bigl(4L + 8/(\eta\tau)\bigr)\bar{\mathcal{E}}^c + 4\bar{\delta} < \eta\tau\,\epsilon. \label{eq:achievability-cond}
\end{align}
If~\eqref{eq:achievability-cond} fails, the convergence bound~\eqref{eq:main-bound} cannot guarantee $\frac{1}{T}\sum_{t}\mathbb{E}\|\nabla F_t(\btheta_t)\|^2 < \epsilon$ regardless of training duration or sensing parameters.
\end{theorem}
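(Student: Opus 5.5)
The plan is to read Theorem~\ref{thm:achievability} as a statement about the right-hand side of~\eqref{eq:main-bound}. We lower bound its $T$-independent part (the floor) by a quantity that depends only on hardware limits. Then we show that the floor must be below $\epsilon$ for the bound to certify $\epsilon$-stationarity.

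\emph{Step 1: rewrite the floor.} Multiply~\eqref{eq:main-bound} by $\eta\tau$. The right-hand side becomes
\[
\tfrac{4(\mathbb{E}[F_0(\btheta_0)]-F_{\inf})}{T} + \tfrac{2L}{T}\textstyle\sum_t \Phi_t + \tfrac{8}{\eta\tau T}\textstyle\sum_t(\mathcal{E}_t^c+\mathcal{E}_t^d) + \tfrac{4}{T}\textstyle\sum_t\delta_t .
\]
Substituting $\Phi_t=\mathcal{E}_t^s+2(\mathcal{E}_t^c+\mathcal{E}_t^d)$ from~\eqref{eq:total-perturbation} turns this into
\[
\tfrac{4(\cdot)}{T} + 2L\,\bar{\mathcal{E}}^s + \bigl(4L+8/(\eta\tau)\bigr)\bar{\mathcal{E}}^c + 4\bar\delta,
\]
where $\bar{\mathcal{E}}^s\triangleq\frac1T\sum_t\mathcal{E}_t^s$. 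This grouping explains the coefficients in~\eqref{eq:achievability-cond}.

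\emph{Step 2: hardware lower bound on $\bar{\mathcal{E}}^s$.} Treat $\mathcal{E}_t^s$ as the quantity defined in~\eqref{eq:sensing-bound}. Applying Corollary~\ref{cor:floor} replaces $p^s_{n,\ell}q_{n,\ell}$ by $P_n^{s,\max}q_n^{\max}$. We then need $\kappa_n^{(m_{n,\ell})}(\btheta_t)\ge\min_{m\in\calM_n}\tilde\kappa_n^{(m)}$, and this is the main obstacle. A6 only gives the upper bound $\kappa\le\tilde\kappa$, so it cannot yield a lower bound. I would handle this by working with the worst-case (uniform-in-$\btheta$) version of the floor. The bound~\eqref{eq:main-bound} must hold as a guarantee over all trajectories, so the relevant floor is the one obtained after replacing $\kappa$ by its uniform envelope $\tilde\kappa$. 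Once that replacement is made, minimizing over the modality choice gives $\tilde\kappa_n^{(m_{n,\ell})}\ge\min_m\tilde\kappa_n^{(m)}$. Averaging over $t$ then yields $\bar{\mathcal{E}}^s\ge\bar{\mathcal{E}}^{s,\min}$ for every feasible configuration. I would state explicitly that this is the reading under which the theorem holds. The alternative would be to add an assumption of the form $\lambda_{\min}$-type lower sensitivity, and I expect the paper to use the envelope interpretation instead.

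\emph{Step 3: necessity.} Suppose~\eqref{eq:achievability-cond} fails. By Steps 1--2, the right-hand side of~\eqref{eq:main-bound} multiplied by $\eta\tau$ is at least
\[
\tfrac{4(\cdot)}{T} + 2L\bar{\mathcal{E}}^{s,\min} + \bigl(4L+8/(\eta\tau)\bigr)\bar{\mathcal{E}}^c + 4\bar\delta \ge \eta\tau\epsilon,
\]
because the initialization term is nonnegative by A2. Dividing by $\eta\tau$, the certified bound is at least $\epsilon$ for every $T$. This holds whatever the sensing parameters are, since $\bar{\mathcal{E}}^{s,\min}$ already uses the best modality, maximal power, and maximal resolution. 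Hence~\eqref{eq:main-bound} cannot guarantee an average squared gradient below $\epsilon$. Contrapositively, the condition is necessary for the bound to certify $\epsilon$-stationarity. The remaining steps are routine bookkeeping.
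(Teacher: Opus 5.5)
Your proposal is correct and follows the paper's proof. You substitute $\Phi_t=\mathcal{E}_t^s+2(\mathcal{E}_t^c+\mathcal{E}_t^d)$ from~\eqref{eq:total-perturbation} to get the coefficients $2L$, $4L+8/(\eta\tau)$, and $4$, lower-bound the sensing part by $\bar{\mathcal{E}}^{s,\min}$, and discard the initialization term; keeping that term as nonnegative via A2 is slightly cleaner than the paper's ``vanishing'' remark and makes the conclusion hold for every $T$. You also correctly point out that A6 only gives $\kappa_n^{(m)}(\btheta)\le\tilde{\kappa}_n^{(m)}$, so $\bar{\mathcal{E}}^{s}\ge\bar{\mathcal{E}}^{s,\min}$ holds only for the uniform-envelope version of the floor, which is exactly the step the paper passes over when it simply invokes~\eqref{eq:Esmin} as a lower bound.
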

\begin{proof}
The RHS of~\eqref{eq:main-bound} is a sum of nonnegative terms. Time-averaging and using~\eqref{eq:total-perturbation} together with the lower bound~\eqref{eq:Esmin} yields a lower bound on the RHS of~\eqref{eq:main-bound}:
\[
\text{RHS of~\eqref{eq:main-bound}}
\ge
\frac{2L}{\eta\tau}\bar{\mathcal{E}}^{s,\min}
+
\frac{4L+8/(\eta\tau)}{\eta\tau}\bar{\mathcal{E}}^c
+
\frac{4}{\eta\tau}\bar{\delta},
\]
up to the vanishing initialization term. For~\eqref{eq:main-bound} to certify $\epsilon$-stationarity, this lower bound on its RHS must itself be below $\epsilon$, giving the stated necessary condition.
\end{proof}


\section{Optimal Sensing Configuration}
\label{sec:optimization}
The convergence bound~\eqref{eq:main-bound} shows that all three controllable floor components $\bar{\mathcal{E}}^s$, $\bar{\mathcal{E}}^c_t$, and $\bar{\mathcal{E}}^d_t$ depend on the per-round sensing decisions through the accumulated dataset sizes $D_{n,t}$. According to Proposition~\ref{prop:sample-tradeoff}, the sample count $\overline{D}_{n,t}$ is itself a per-round decision variable. For each device~$n$ in round $t$, we seek a sensing configuration $(m_{n,t}, q_{n,t}, p_{n,t}^s, \overline{D}_{n,t})$ that jointly controls convergence quality and sensing energy. Adopting a weighted-sum formulation with nonnegative weights $(\omega_L, \omega_C, \omega_E)$ on the sensing floor, mini-batch variance, and sensing energy contributions respectively, we solve
\begin{align}
\mathcal{P}: \min_{\substack{m_{n,t}, q_{n,t}, \\ p_{n,t}^s, \overline{D}_{n,t}}} \;& \omega_L\, \frac{\tilde{\kappa}_n^{(m_{n,t})}\,\overline{D}_{n,t}}{D_{n,t}^2\, p_{n,t}^s\, q_{n,t}} + \omega_C\, \frac{\eta^2\tau\sigma_{\rm pop}^2}{D_{n,t}} \notag\\ 
& \quad + \omega_E\, \frac{p_{n,t}^s\, \overline{D}_{n,t}}{r_{{\rm s},n}^{(m_{n,t})}} \label{eq:joint-opt}\\
\mathrm{s.t.}\;  m_{n,t} &\,\in \calM_n,\;\, q_{n,t} \in \calQ_n,\;\, \overline{D}_{n,t} \in \{1,\ldots,\overline{D}_n^{\max}\}, \notag \\
& 0 < p_{n,t}^s \le P_n^{s,\max}, \notag
\end{align}
where $D_{n,t} = D_{n,t-1}+\overline{D}_{n,t}$, $\tilde{\kappa}_n^{(m)} = \beta_n\,\mathrm{tr}(\bSigma_n^{(m)})$ via A6 bound, and $\overline{D}_n^{\max}$ is the maximum per-round sensing samples. Problem $\mathcal{P}$ is solved in two steps: closed-form $p_{n,t}^s$ for fixed $(m_{n,t}, q_{n,t}, \overline{D}_{n,t})$, then finite search over $(m_{n,t}, q_{n,t}, \overline{D}_{n,t})$.
 
\subsection{Optimal Sensing Power}
\label{sec:pstar}
 
For fixed $(m_{n,t}, q_{n,t}, \overline{D}_{n,t})$, only the first and third terms of $\mathcal{P}$ depend on $p_{n,t}^s$, and their sum is strictly convex. Setting the derivative to zero gives
\begin{align}
  p_{n,t}^{s\star} = \min\!\left(\!P_n^{s,\max},\;\sqrt{\frac{\omega_L\,\tilde{\kappa}_n^{(m_{n,t})}\, r_{{\rm s},n}^{(m_{n,t})}}{\omega_E\, D_{n,t}^2\, q_{n,t}}}\right). \label{eq:pstar}
\end{align}

\begin{corollary}[Hardware-saturation threshold]
\label{cor:sample-threshold}
For fixed $(m, q)$, the minimizer $p_{n,t}^{s\star}$ in \eqref{eq:pstar} attains the hardware cap $P_n^{s,\max}$ when the accumulated dataset reaches the threshold
\begin{align}
  D_n^\dagger(m,q) \triangleq \frac{1}{P_n^{s,\max}}\sqrt{\frac{\omega_L\,\tilde{\kappa}_n^{(m)}\, r_{{\rm s},n}^{(m)}}{\omega_E\, q}}. 
  \label{eq:Dstar}
\end{align}
For $D_{n,t} \le D_n^\dagger$, the device operates at maximum sensing power $P_n^{s,\max}$, and the per-round sensing floor contribution scales as $\overline{D}_{n,t}/(D_{n,t}^2\, P_n^{s,\max}\, q)$. For $D_{n,t} > D_n^\dagger$, the unconstrained optimum binds, $p_{n,t}^{s\star} \propto 1/D_{n,t}$, and substitution into the floor yields a contribution scaling as $\overline{D}_{n,t}/D_{n,t}$. In both regimes, the round-$t$ floor contribution is bounded above and decreases with the accumulated dataset, reflecting the dominance of the historical-averaging effect over the per-sample noise increase from reduced power.
\end{corollary}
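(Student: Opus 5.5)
The plan is to reduce everything to the closed form~\eqref{eq:pstar} and then read off the two regimes by direct substitution. First I will re-derive~\eqref{eq:pstar} cleanly. For fixed $(m,q,\overline{D}_{n,t})$ the $p$-dependent part of $\mathcal{P}$ is $g(p) = a/p + b\,p$ on $(0,P_n^{s,\max}]$, with
\[
a = \frac{\omega_L\tilde{\kappa}_n^{(m)}\overline{D}_{n,t}}{D_{n,t}^2 q}, \qquad b = \frac{\omega_E\overline{D}_{n,t}}{r_{{\rm s},n}^{(m)}}.
\]
I will treat the case $\omega_L,\omega_E>0$ and $\tilde{\kappa}_n^{(m)}>0$, so that $a,b>0$. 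Then $g$ is strictly convex with unique unconstrained minimizer $\sqrt{a/b}$. The factor $\overline{D}_{n,t}$ cancels, so $\sqrt{a/b} = \sqrt{A}/D_{n,t}$ with
\[
A \triangleq \frac{\omega_L\tilde{\kappa}_n^{(m)} r_{{\rm s},n}^{(m)}}{\omega_E\, q}.
\]
Since $g$ is decreasing on $(0,\sqrt{a/b}]$, the constrained minimizer is the projection $\min\bigl(P_n^{s,\max},\sqrt{A}/D_{n,t}\bigr)$, which recovers~\eqref{eq:pstar}.

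Next I will characterize when the cap binds. We have $p_{n,t}^{s\star}=P_n^{s,\max}$ if and only if $\sqrt{A}/D_{n,t}\ge P_n^{s,\max}$, that is, $D_{n,t}\le \sqrt{A}/P_n^{s,\max}=D_n^\dagger(m,q)$, which is exactly~\eqref{eq:Dstar}. I will state this precisely as ``the cap is active on $D_{n,t}\le D_n^\dagger$ and inactive beyond it.'' This is the correct reading of the corollary's phrase ``attains the cap when the accumulated dataset reaches the threshold'': $D_n^\dagger$ is the boundary at which the two branches of the $\min$ coincide, so $p_{n,t}^{s\star}$ is continuous in $D_{n,t}$.

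Then I will substitute into the round-$t$ floor contribution from Proposition~\ref{prop:alignment}, using $\tilde{\kappa}$ in place of $\kappa$, namely $h \triangleq \tilde{\kappa}_n^{(m)}\overline{D}_{n,t}/(D_{n,t}^2 p\, q)$.
\begin{itemize}
\item \emph{Saturated regime, $D_{n,t}\le D_n^\dagger$:} here $h=\tilde{\kappa}_n^{(m)}\overline{D}_{n,t}/(D_{n,t}^2P_n^{s,\max}q)$, which is the stated scaling.
\item \emph{Interior regime, $D_{n,t}> D_n^\dagger$:} here $p=\sqrt{A}/D_{n,t}$, so $h=\tilde{\kappa}_n^{(m)}\overline{D}_{n,t}/(\sqrt{A}\,q\,D_{n,t})\propto \overline{D}_{n,t}/D_{n,t}$.
\end{itemize}
For boundedness I will use $\overline{D}_{n,t}\le D_{n,t}$ from~\eqref{eq:accumulation}. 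This gives $\overline{D}_{n,t}/D_{n,t}\le 1$ and $\overline{D}_{n,t}/D_{n,t}^2\le 1/D_{n,t}\le 1$, since $D_{n,t}\ge\overline{D}_{n,t}\ge1$. Both branches therefore have finite constant bounds, and the two expressions agree at $D_{n,t}=D_n^\dagger$.

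For monotonicity, I will hold $\overline{D}_{n,t}$ fixed and let the accumulated history $D_{n,t-1}$, and hence $D_{n,t}$, grow. Each branch is then strictly decreasing in $D_{n,t}$, and continuity at $D_n^\dagger$ makes the piecewise function decreasing overall. The main point needing care is interpretive rather than computational. ``Decreases with the accumulated dataset'' must be understood with $\overline{D}_{n,t}$ fixed, since increasing $\overline{D}_{n,t}$ itself raises $\overline{D}_{n,t}/D_{n,t}$. I also need to explain why the $1/D_{n,t}$ decay in the interior regime survives even though the power decreases like $1/D_{n,t}$: the $D_{n,t}^{-2}$ averaging factor beats the $D_{n,t}^{+1}$ increase in per-sample noise. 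This is the ``historical-averaging dominance'' the corollary refers to. I will finish by noting the degenerate cases $\omega_E=0$ (the cap always binds) and $\tilde{\kappa}_n^{(m)}=0$ (the floor contribution vanishes), both of which are consistent with the statement.
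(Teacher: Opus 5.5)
Your proposal is correct and follows essentially the same route as the paper, which gives no separate proof and treats the corollary as immediate from \eqref{eq:pstar}: equate the unconstrained root $\sqrt{A}/D_{n,t}$ with $P_n^{s,\max}$ to get $D_n^\dagger$, then substitute each branch of the $\min$ into the floor term. Your extra steps make explicit what the paper leaves implicit, namely continuity at $D_n^\dagger$, boundedness via $\overline{D}_{n,t}\le D_{n,t}$, and reading ``decreases with the accumulated dataset'' with $\overline{D}_{n,t}$ held fixed.
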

 
\subsection{Joint Search Over Modality, Resolution, and Sample Count}
\label{sec:discrete}
 
With $p_{n,t}^{s\star}$ from~\eqref{eq:pstar}, the discrete values $(m_{n,t}, q_{n,t}, \overline{D}_{n,t})$ are obtained by minimizing the combined sensing cost 
\begin{align*}
  J_{n,t}^{\rm disc} \triangleq \omega_L\, \frac{\tilde{\kappa}_n^{(m_{n,t})}\,\overline{D}_{n,t}}{D_{n,t}^2\, p_{n,t}^{s\star}\, q_{n,t}} + \omega_C\, \frac{\eta^2\tau\sigma_{\rm pop}^2}{D_{n,t}} +\; \omega_E\, \frac{p_{n,t}^{s\star}\,\overline{D}_{n,t}}{r_{{\rm s},n}^{(m_{n,t})}} \label{eq:disc-cost}
\end{align*}
over the finite set $\calM_n \times \calQ_n \times \{1,\ldots,\overline{D}_n^{\max}\}$. The objective in $\overline{D}_{n,t}$ is unimodal: the mini-batch term decreases while the floor and energy terms increase, yielding a unique interior minimizer for each $(m_{n,t}, q_{n,t})$. Proposition~\ref{prop:alignment} identifies $\kappa_n^{(m)}(\btheta_t)$ as the exact modality criterion; $\tilde{\kappa}_n^{(m)}$ is used as a tractable upper bound when $\bG_n(\btheta_t)$ is unavailable. The search has complexity $\mathcal{O}(|\calM_n|\,|\calQ_n|\,\overline{D}_n^{\max})$ per device per round.

\section{Numerical Results}
\label{sec:simulations}
Experiments use the MNIST dataset with a CNN model. Here, $N = 10$, $\tau = 5$, $B=16$, and $T = 100$. Both i.i.d.\ and non-i.i.d.\ data partitions are evaluated across devices. We assume that three modalities are available with $(c_1^{(m)}, r_s^{(m)})$ values $(1.0, 20)$, $(0.5, 10)$, and $(0.2, 1)$ for $m=1,2,3$ respectively, under the isotropic model $\bSigma_n^{(m)} = c_1^{(m)}\bI_{d_u}$, so $\tilde{\kappa}_n^{(m)} \propto c_1^{(m)}$. Thus, $m=1$ and $m=3$ are the worst and the best modalities, respectively. The 10 devices are split into three groups: 4 with $\calM_n=\{1\}$, 3 with $\calM_n=\{1,2\}$, and 3 with $\calM_n=\{1,2,3\}$. The resolution set is $\calQ_n=\{1,2,4\}$ for all devices, $P_n^{s,\max}=1$~W, and $\overline{D}_n^{\max}=50$ per round.
 
 
 
 
 
\begin{figure}[!t]
  \centering
  \begin{subfigure}[t]{0.45\linewidth}
    \centering
    \includegraphics[width=\linewidth]{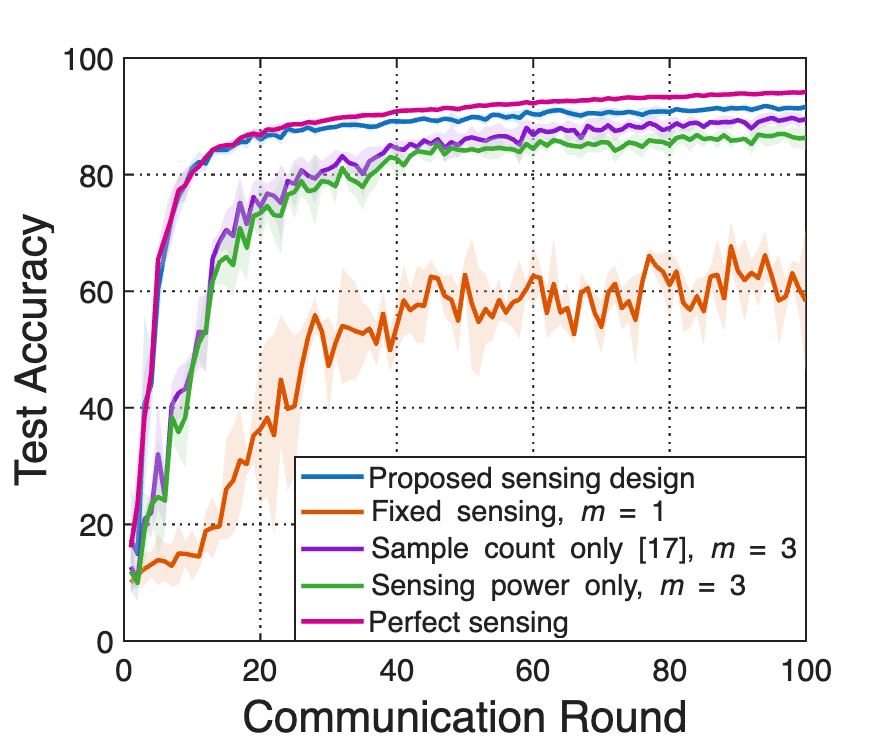}
    \caption{i.i.d.}
    \label{fig:acc_iid}
  \end{subfigure}
  \hfill
  \begin{subfigure}[t]{0.45\linewidth}
    \centering
    \includegraphics[width=\linewidth]{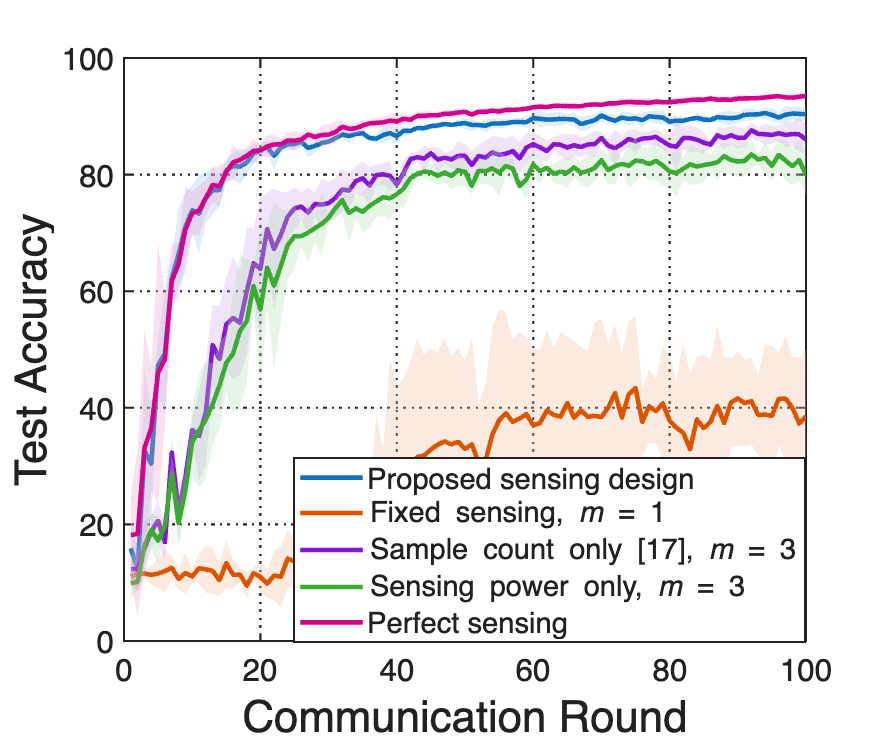}
    \caption{non-i.i.d.}
    \label{fig:acc_noniid}
  \end{subfigure}
  \caption{Test accuracy vs.\ communication rounds (MNIST).\vspace{-0.2in}}
  \label{fig:accuracy}
\end{figure}
 
Fig.~\ref{fig:accuracy} compares the test accuracy across several benchmarks. The proposed sensing-aware design outperforms all baselines and remains very close to the perfect sensing oracle, where the sensed data are noiseless. The sample count control baseline, following the sensing-side sample-size optimization in~\cite{Liang2025ISCC}, optimizes only the number of newly sensed samples while we fix its modality to the best one, $m=3$; it still underperforms the proposed scheme. Results confirm the benefit of jointly adapting all sensing variables.

\section{Conclusion}
\label{sec:conclusion}
This paper studies imperfect multimodal sensing in distributed edge learning, deriving a non-convex convergence bound under structured per-modality sensing noise and data accumulation. The irreducible sensing floor is governed by the noise-gradient alignment $\kappa_n^{(m)}$, so optimal modality selection minimizes this alignment rather than total noise power alone. The analysis yields a hardware achievability condition for $\epsilon$-stationarity, a saturation threshold $D_n^\dagger$, and a sample count tradeoff. We jointly optimize modality, resolution, power, and sample count, and demonstrate the resulting gains via simulations. The journal version extends to sensing-communication-computation co-design under practical link impairments, imperfect sensing, and over-the-air aggregation.


\balance 

\bibliographystyle{IEEEtran}
\bibliography{IEEEabrv,ref}

@STRING{IEEE_J_VT         = "{IEEE} Trans. Veh. Technol."}

@STRING{IEEE_J_SP         = "{IEEE} Trans. Signal Processing"}

@STRING{IEEE_J_STSP       = "{IEEE} J. Sel. Topics Signal Process."}

@STRING{IEEE_J_WCOM       = "{IEEE} Trans. Wireless Commun."}

@STRING{IEEE_J_JSAIT       = "{IEEE} J. Select. Areas Inform. Theory"}

@STRING{IEEE_J_MC          = "{IEEE} Trans. Mobile Comput."}

@ARTICLE{Saad-6G-2020,
  author={Saad, Walid and Bennis, Mehdi and Chen, Mingzhe},
  journal={IEEE Network}, 
  title={A Vision of {6G} Wireless Systems: Applications, Trends, Technologies, and Open Research Problems}, 
  year={2020},
  volume={34},
  number={3},
  pages={134-142},
  }

@ARTICLE{Giordani-6G-ComMag2020,
  author={Giordani, Marco and Polese, Michele and Mezzavilla, Marco and Rangan, Sundeep and Zorzi, Michele},
  journal={IEEE Communications Magazine}, 
  title={Toward {6G} Networks: Use Cases and Technologies}, 
  year={2020},
  volume={58},
  number={3},
  pages={55-61},
  }

@ARTICLE{Brinton-6G-ComMag2025,
  author={Brinton, Christopher G. and Chiang, Mung and Kim, Kwang Taik and Love, David J. and Beesley, Michael and Repeta, Morris and Roese, John and Beming, Per and Ekudden, Erik and Li, Clara and Wu, Geng and Batra, Nishant and Ghosh, Amitava and Ziegler, Volker and Ji, Tingfang and Prakash, Rajat and Smee, John},
  journal={IEEE Communications Magazine}, 
  title={Key Focus Areas and Enabling Technologies for {6G}}, 
  year={2025},
  volume={63},
  number={3},
  pages={84-91},
  }

@ARTICLE{Andrews-6G,
  author={Andrews, Jeffrey G. and Humphreys, Todd E. and Ji, Tingfang},
  journal={IEEE BITS the Information Theory Magazine}, 
  title={{6G} Takes Shape}, 
  year={2024},
  volume={4},
  number={1},
  pages={2-24},
  }

@inproceedings{McMahan2017FedAvg,
  author    = {McMahan, Brendan and Moore, Eider and Ramage, Daniel and Hampson, Seth and Aguera y Arcas, Blaise},
  title     = {Communication-Efficient Learning of Deep Networks from Decentralized Data},
  booktitle = {Proc. International Conference on Artificial Intelligence and Statistics (AISTATS)},
  year      = {2017},
  pages     = {1273--1282}
}

@article{Li2020FLSurvey,
  author  = {Li, Tian and Sahu, Anit Kumar and Talwalkar, Ameet and Smith, Virginia},
  title   = {Federated Learning: Challenges, Methods, and Future Directions},
  journal = {IEEE Signal Processing Magazine},
  year    = {2020},
  volume  = {37},
  number  = {3},
  pages   = {50--60}
}

@article{Kairouz2021AdvancesFL,
  author  = {Kairouz, Peter and McMahan, H. Brendan and Avent, Brendan and others},
  title   = {Advances and Open Problems in Federated Learning},
  journal = {Foundations and Trends in Machine Learning},
  year    = {2021},
  volume  = {14},
  number  = {1--2},
  pages   = {1--210}
}

@article{Amiri2020ADSGD,
  author  = {Amiri, Mohammad Mohammadi and Gunduz, Deniz},
  title   = {Machine Learning at the Wireless Edge: Distributed Stochastic Gradient Descent Over-the-Air},
  journal = IEEE_J_SP,
  year    = {2020},
  volume  = {68},
  pages   = {2155--2169}
}

@article{Yang2020AirCompFL,
  author  = {Yang, Kai and Jiang, Tao and Shi, Yuanming and Ding, Zhi},
  title   = {Federated Learning via Over-the-Air Computation},
  journal = IEEE_J_WCOM,
  year    = {2020},
  volume  = {19},
  number  = {3},
  pages   = {2022--2035}
}

@article{Sery2021OTAHetero,
  author  = {Sery, Tomer and Shlezinger, Nir and Cohen, Kobi and Eldar, Yonina C.},
  title   = {Over-the-Air Federated Learning from Heterogeneous Data},
  journal = IEEE_J_SP,
  year    = {2021},
  volume  = {69},
  pages   = {3796--3811}
}

@article{Liang2025ISCC,
  author  = {Liang, Yipeng and Chen, Qimei and Zhu, Guangxu and Eldar, Yonina C. and Cui, Shuguang},
  title   = {Communication-and-Energy Efficient Over-the-Air Federated Learning},
  journal = IEEE_J_WCOM,
  year={2025},
  volume={24},
  number={1},
  pages={767-782},
}

@article{Fu2025JSCC,
  author  = {Fu, Yang and Qin, Peng and Tang, Guoming and Zhao, Xiongwen},
  title   = {Joint Design of Sensing, Communication, and Computation for Multi-{AAV}-Enabled Over-the-Air Federated Learning},
  journal = IEEE_J_VT,
  year={2025},
  volume={74},
  number={9},
  pages={13909-13924}
}

@article{Cao2026VFEEL,
  author={Cao, Xiaowen and Wen, Dingzhu and Bi, Suzhi and Cui, Yuanhao and Zhu, Guangxu and Hu, Han and Eldar, Yonina C.},
  journal=IEEE_J_MC, 
  title={Joint Sensing, Communication, and Computation for Vertical Federated Edge Learning in Edge Perception Networks}, 
  year={2026},
  volume={},
  number={},
  pages={1-14}
}

@ARTICLE{Karbalayghareh-JSAIT26-dynamicFL,
  author={Karbalayghareh, Mehdi and Love, David J. and Brinton, Christopher G.},
  journal=IEEE_J_JSAIT, 
  title={Coherence-Aware Over-the-Air Distributed Learning Under Heterogeneous Link Impairments}, 
  year={2026},
  volume={7},
  number={},
  pages={46-61},
  }

@ARTICLE{Liu-STSP23_taskOriented,
  author={Liu, Peixi and Zhu, Guangxu and Wang, Shuai and Jiang, Wei and Luo, Wu and Poor, H. Vincent and Cui, Shuguang},
  journal=IEEE_J_STSP, 
  title={Toward Ambient Intelligence: Federated Edge Learning With Task-Oriented Sensing, Computation, and Communication Integration}, 
  year={2023},
  volume={17},
  number={1},
  pages={158-172},
 }

@ARTICLE{Wen-TWC2026-ISCC,
  author={Wen, Dingzhu and Xie, Sijing and Cao, Xiaowen and Cui, Yuanhao and Xu, Jie and Shi, Yuanming and Cui, Shuguang},
  journal=IEEE_J_WCOM, 
  title={Integrated Sensing, Communication, and Computation for Over-the-Air Federated Edge Learning}, 
  year={2026},
  volume={25},
  number={},
  pages={2748-2762},
  }

@ARTICLE{mmFedMC,
  author={Yuan, Liangqi and Han, Dong-Jun and Wang, Su and Upadhyay, Devesh and Brinton, Christopher G.},
  journal=IEEE_J_MC, 
  title={Communication-Efficient Multimodal Federated Learning: Joint Modality and Client Selection}, 
  year={2026},
  volume={},
  number={},
  pages={1-18},
}

@article{RELIEF,
  author = {B. Wu and Z. Ding and J. Huang},
  title = {{RELIEF}: Turning Missing Modalities into Training Acceleration for Federated Learning on Heterogeneous {IoT} Edge},
  journal = {arXiv preprint arXiv:2604.04243},
  year = {2026}
}

@article{Asaad-Tabassum-2026,
  author = {S. Asaad and H. Tabassum and P. Wang},
  title = {Multi-objective Optimization for Over-the-Air Federated Edge Learning-enabled Collaborative Integrated Sensing and Communications},
  journal = {arXiv preprint arXiv:2603.15783},
  year = {2026}
}

@INPROCEEDINGS{EM-PFL-ISAC-2025,
  author={Ni, Zhou and Chintareddy, Sravan Reddy and Guan, Peiyuan and Hashemi, Morteza},
  booktitle={2026 IEEE 23rd Consumer Communications \& Networking Conference (CCNC)}, 
  title={Personalized Federated Learning-Driven Beamforming Optimization for Integrated Sensing and Communication Systems}, 
  year={2026},
  volume={},
  number={},
  pages={1-6},
  }

@inproceedings{Li2020FedProx,
  author = {T. Li and A. K. Sahu and M. Zaheer and M. Sanjabi and A. Talwalkar and V. Smith},
  title = {Federated Optimization in Heterogeneous Networks},
  booktitle = {Proc. Machine Learning and Systems (MLSys)},
  year = {2020}
}

@inproceedings{Karimireddy-SCAFFOLD,
  author = {S. P. Karimireddy and S. Kale and M. Mohri and S. J. Reddi and S. U. Stich and A. T. Suresh},
  title = {{SCAFFOLD}: Stochastic Controlled Averaging for Federated Learning},
  booktitle = {Proc. Int. Conf. Machine Learning (ICML)},
  year = {2020}
}

\end{document}